\newcommand{\E}{\mathrm{E}}
\newcommand{\R}{\mathbb{R}}
\newcommand{\M}{\mathcal{M}}
\newcommand{\Xpri}{X_{\text{priv}}}
\newcommand{\Ypri}{Y_{\text{priv}}}
\newcommand{\Xpub}{X_{\text{pub}}}
\newcommand{\Ypub}{Y_{\text{pub}}'}
\newcommand{\Zpri}{Z_{\text{priv}}}
\begin{document}
\title{Representation Transfer for Differentially Private Drug Sensitivity Prediction}
\date{}
\author{Teppo Niinim\"aki\,$^{1}$, Mikko Heikkil\"a\,$^{2}$, Antti Honkela\,$^{2,3,4,}$\footnote{These authors jointly supervised the work.}~, and Samuel Kaski\,$^{1,*}$\\

  $^{1}$Helsinki Institute for Information Technology HIIT, Department of Computer Science, Aalto University, Finland, \\
$^{2}$Helsinki Institute for Information Technology HIIT, Department of Mathematics and Statistics, University of Helsinki, Finland, \\
$^{3}$Department of Public Health, University of Helsinki, Finland and \\
$^{4}$Helsinki Institute for Information Technology HIIT, Department of Computer Science, University of Helsinki, Finland.}

\maketitle

\abstract{
  \noindent\textbf{Motivation:} Human genomic datasets often contain sensitive
  information that limits use and sharing of the data. In particular,
  simple anonymisation strategies fail to provide sufficient level of
  protection for genomic data, because the data are inherently
  identifiable. Differentially private machine learning
  can help by guaranteeing that the published results
  do not leak too much information about any individual data point.
  Recent research has reached promising results on differentially
  private drug sensitivity prediction using gene expression data.
  Differentially private learning with genomic data is challenging
  because it is more difficult to guarantee the privacy in high
  dimensions. Dimensionality reduction can help, but if the dimension
  reduction mapping is learned from the data, then it needs to be
  differentially private too, which can carry a significant privacy
  cost. Furthermore, the selection of any hyperparameters (such as the
  target dimensionality) needs to also avoid leaking private
  information.\\
  \textbf{Results:} We study an approach that uses a large public
  dataset of similar type to learn a compact representation for
  differentially private learning. We compare three representation
  learning methods: variational autoencoders, PCA and random
  projection. We solve two machine learning tasks on gene expression
  of cancer cell lines: cancer type classification, and drug sensitivity
  prediction. The experiments demonstrate significant benefit from all
  representation learning methods with variational autoencoders
  providing the most accurate predictions most often. Our results
  significantly improve over previous state-of-the-art in accuracy of
  differentially private drug sensitivity prediction.\\
}

\section{Introduction}

Privacy-preserving machine learning has the potential to enable the
research use of many sensitive datasets, that would otherwise be
out of reach for the community. This is especially the case for
medical data, which almost always contain sensitive information traceable
back to the data subjects. As an example, it has been shown that
individuals can be identified from genomic data \citep{gymrek2013}
including mixtures from several individuals \citep{homer2008}. It is
likely that functional genomics data such as gene expression data are
also identifiable. While different anonymisation strategies
\citep{sweeney2002, machanavajjhala2007, li2007} can protect the
privacy of the data subjects to some degree, they do not have formal
guarantees and can fail to provide sufficient protection in practice
\citep{ganta2008}.

\emph{Differential privacy} (DP) \citep{dwork2006, dwork2014} is a framework that guarantees strict bounds for the amount of leaked private information, even in the presence of arbitrary side information.
The guarantees are obtained by adding specific forms of randomisation
to the computation process. In a machine learning context this usually
means adding noise either directly to the input of the algorithm
(\emph{input perturbation}), to the output (\emph{output
  perturbation}), or modifying the algorithm itself, for instance, by
perturbing the optimisation objective (\emph{objective perturbation}).

The privacy guarantee is controlled by a ``privacy budget'' parameter,
usually denoted by $\epsilon > 0$; smaller $\epsilon$ means stricter
guarantees, and can be achieved by increasing the amount of
noise. Formally, a randomised mechanism $\M$ is said to be
\emph{$\epsilon$-differentially private}, if for all pairs of
\emph{neighbouring} datasets $X, X'$ differing\footnote{There are two
  slightly different definitions of neighbouring datasets. In
  \emph{bounded} case, the value of one sample is allowed to
  change. In \emph{unbounded} case, the addition or removal of one
  sample is allowed. Unbounded $\epsilon$-DP guarantee implies bounded
  $2\epsilon$-DP guarantee. This article uses the bounded case.} on a
single sample and all measurable subsets $S$ of possible outputs,
\[
	\Pr(\M(X) \in S) \le e^\epsilon \Pr(\M(X') \in S).
\]
Intuitively, this means that changing one sample in the dataset can change the output distribution only by a factor $e^\epsilon$.

As an extension, $\M$ is said to be \emph{($\epsilon$, $\delta$)-differentially private}, if
\[
	\Pr(\M(X) \in S) \le e^\epsilon \Pr(\M(X') \in S) + \delta,
\]
for all measurable $S$ and all neighbouring datasets $X, X'$. The
condition with nonzero $\delta > 0$ is often easier to achieve than
pure $\epsilon$-DP.

In this article we are interested in DP learning for drug sensitivity
prediction using gene expression data. First proposed by
\citet{Staunton2001}, the drug sensitivity prediction problem has
attracted significant attention recently, including from a DREAM
challenge in 2012 \citep{Costello2014} that provided standardised
evaluation metrics.  The scale of the cytotoxicity assays needed has
kept the sizes of the available data sets relatively small from a
machine learning perspective. \citet{honkela2018} were the first to
apply DP learning to this problem. They needed to specifically limit
the sensitivity of the learning and the dimensionality of the input
data to make the learning feasible.

In abstract terms, our goal in this problem is DP learning of predictive
models with high-dimensional input data, where both input and output
variables need DP protection.
This is a case where DP methods tend to run into trouble with moderate dataset sizes: 
the amount of noise that needs to be added usually increases quickly with the dimensionality, leading to output that is dominated by the noise. This warrants the use of dimensionality reducing methods with the aim of finding a good low-dimensional representation of the original data. However, unless one uses a random projection or some other ``dummy'' method that does not depend on the data, finding a good representation can also leak private information. For this reason, the dimension reduction method itself would also need to be made differentially private, which can completely invalidate the noise magnitude savings obtained in any downstream task like prediction.

Different solutions have been proposed for various special cases:
\citet{kifer2012} solve sparse linear regression problems by using an
$\epsilon$-DP feature selection algorithm. \citet{honkela2018} utilise
external knowledge to select a relevant subset of
features. \citet{kasiviswanathan2016} show theoretical results on
using random projections to improve DP learning on high-dimensional
problems. Differentially private versions of methods such as PCA
\citep{chaudhuri2012, dwork2014pca} or deep learning \citep{abadi2016,
  acs2018} exist and could be used to learn a representation, but the
noise cost can be impractically large for small but high-dimensional
datasets.

We study a straightforward solution based on feature representation
transfer, similar to self-taught learning of \citet{Raina2007}. By
using an additional non-sensitive dataset to learn the representation,
we can apply more advanced representation learning methods. This
approach has many advantages: we do not need labels for the additional
data set, although in our case we make use of labels for a different
task; and only the main learning algorithm needs to be differentially
private, while the representation can be learned using any non-DP
method.  Additionally, the public data can also be used for optimising
any hyperparameters for the representation learning.  In this article,
we consider PCA and variational autoencoders.

Differentially private transfer learning was recently considered by
\citet{Wang2019} in a hypothesis transfer setting, where models
trained on several related source domains are used to improve learning
in the desired target domain. While related, this approach is not
directly applicable to our problem because the setup is different.

Another related approach was considered by \citet{papernot2017}, who
propose differentially private semi-supervised knowledge transfer that
uses an ensemble of ``teacher'' models trained on private data to
label unlabeled public data, which is then used to train a ``student''
model that will be released. The method is flexible in a sense that it
can use any ``black-box'' model as teachers and student. However, it
is limited to classification tasks. In addition, it seems to require a
rather large private dataset in addition to a small public dataset---a
setting that is somewhat different from what we are interested in.

Yet another strategy is to learn a differentially private unsupervised generative model for the data (including the target variable for the prediction task of interest), use it to generate a synthetic version of the data, and use a non-DP algorithm for the actual learning task of interest. Several methods have been proposed for differentially private data sharing \citep{zhang2017, xie2018, acs2018} that could be used for generative model learning and data generation.
For the problem we are considering, however, this approach is
problematic as it requires solving a more general and difficult
learning task, good solution of which would typically require orders
of magnitude more private data than a direct solution of the original
prediction task.

The rest of this article is organised as follows:
In Section \ref{sec:approach} we formalise the problem setting and give an overview of our proposed approach.
Section \ref{sec:methods} gives more details on the implementation of different parts of the proposed approach.
And finally, in Section \ref{sec:results} we conduct experiments with the approach on two different prediction tasks on genomic data.

\section{Approach}
\label{sec:approach}

We assume a setting where we have a private dataset containing a high-dimensional $n \times d$ feature matrix $\Xpri$ and an $n \times 1$ target vector $\Ypri$, where $n$ is the number of samples and $d$ is the number of features. The goal is to learn a differentially private predictor from $\Xpri$ to $\Ypri$.
As learning to predict $\Ypri$ from high-dimensional $\Xpri$ directly is typically not feasible, with moderate sample size and a reasonable privacy budget, we opt for using public data to learn a low-dimensional representation for $\Xpri$.
Therefore, we also assume a publicly available dataset of an $m \times d$ feature matrix $\Xpub$ and an $m \times 1$ auxiliary target vector $\Ypub$ for a related auxiliary prediction task. While a representation can be learned with $\Xpub$ only, the availability of $\Ypub$ is useful for selecting the size of the representation and any other hyperparameters.

We make the following informal assumptions about the relation of the public and the private data:
(1) $\Xpri$ and $\Xpub$ contain the same set of features and are either draws from the same distribution or otherwise distributed similarly enough that using the same mapping to compute a representation is reasonable.
(2) $\Ypub$ may or may not be of the same type as $\Ypri$, but the prediction tasks should resemble each other enough that the prediction of $\Ypub$ can be used for optimising the hyperparameters for the main task of predicting $\Ypri$.

\begin{figure}[t]
	\centering
	\includegraphics[width=\linewidth]{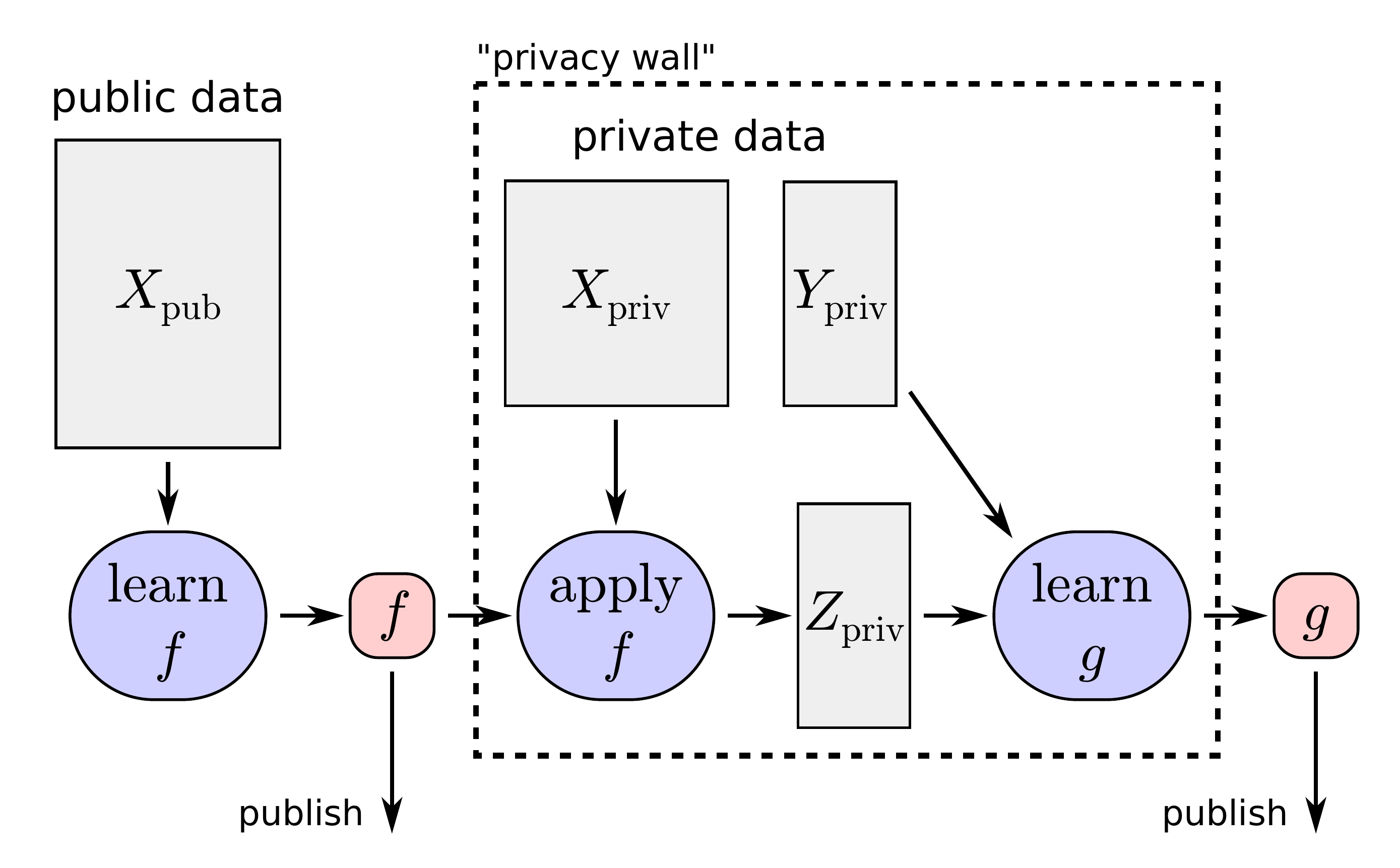}
	\caption{The process of learning $f$ and $g$. Since the learning of $g$ is DP, the leakage of information outside of the ``privacy wall'' is controlled.}
	\label{fig:learning_flow}
\end{figure}

We propose the following procedure:
\begin{enumerate}
	\item Use the public data to learn a dimension-reducing representation mapping $f: \R^d \rightarrow \R^r$, where $r \ll d$, such that $f^{-1}(f(\Xpub)) \approx \Xpub$.
	\item Obtain a low-dimensional representation $\Zpri$ of the private feature data by applying $f$ to $\Xpri$.
	\item Learn a differentially private predictor $g$ such that $g(\Zpri) \approx \Ypri$.
	\item Publish $g \circ f$.
\end{enumerate}
An overview of the learning process is shown in Figure~\ref{fig:learning_flow}.

It is easy to see that the proposed process has the same DP-guarantees as the learning algorithm of step 3:
\newtheorem{theorem}{Theorem}
\begin{theorem}
If step 3 is $(\epsilon, \delta)$-DP w.r.t. $\Zpri$ and $\Ypri$, then the whole process is also $(\epsilon, \delta)$-DP w.r.t $\Xpri$ and $\Ypri$.
\end{theorem}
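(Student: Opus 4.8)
The plan is to recognise the whole process as a composition of three operations and then invoke the two standard closure properties of differential privacy: invariance under post-processing, and invariance under data-independent preprocessing that maps neighbouring datasets to neighbouring datasets. The crucial structural fact is that $f$ is learned in step 1 from the public data $(\Xpub, \Ypub)$ alone, so relative to the private data it is a fixed, deterministic map; it carries no information about $(\Xpri, \Ypri)$ that could leak. Thus I would write the released object as $g \circ f$, where $f$ is fixed and $g$ is the random output of the step-3 mechanism applied to $(\Zpri, \Ypri) = (f(\Xpri), \Ypri)$.

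First I would formalise the preprocessing map $\phi \colon (\Xpri, \Ypri) \mapsto (f(\Xpri), \Ypri)$ and argue that it preserves the neighbouring relation. In the bounded-DP setting used here, a neighbour $\Xpri'$ of $\Xpri$ differs in the features of a single sample. Since $f$ is applied row-wise (independently to each sample), changing one row of $\Xpri$ changes exactly the corresponding row of $\Zpri = f(\Xpri)$ and leaves the others untouched, while $\Ypri$ is not modified at all. Hence $(\Zpri, \Ypri)$ and $\phi(\Xpri', \Ypri')$ differ in a single sample and are themselves neighbouring.

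Let $\mathcal{A}$ denote the step-3 mechanism, so that $\mathcal{A}(\Zpri, \Ypri) = g$, and by hypothesis $\mathcal{A}$ is $(\epsilon, \delta)$-DP w.r.t. its input. Chaining the two observations, for any neighbouring $(\Xpri, \Ypri)$, $(\Xpri', \Ypri')$ and any measurable $S$,
\[
\Pr\big(\mathcal{A}(\phi(\Xpri, \Ypri)) \in S\big) \le e^{\epsilon}\,\Pr\big(\mathcal{A}(\phi(\Xpri', \Ypri')) \in S\big) + \delta,
\]
because $\phi$ sends neighbours to neighbours and $\mathcal{A}$ satisfies the DP inequality on neighbours. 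This shows $\mathcal{A} \circ \phi$ is $(\epsilon, \delta)$-DP w.r.t. $(\Xpri, \Ypri)$. Finally, the published object $g \circ f$ is obtained from $g$ by the deterministic map $g \mapsto g \circ f$ (legitimate precisely because $f$ is fixed), so publishing it is post-processing; by the post-processing invariance of $(\epsilon, \delta)$-DP the guarantee is inherited, completing the argument.

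I do not expect any genuine obstacle here: the result is essentially a bookkeeping argument assembling two textbook closure properties. The one point that needs care — and the only place the argument could break — is the claim that $\phi$ preserves neighbourhoods, which relies entirely on $f$ being independent of the private data and applied per-sample. If $f$ had instead been fit on $\Xpri$, a single changed private sample could alter $f$ and hence perturb every row of $\Zpri$ simultaneously, destroying the neighbouring relation and invalidating the transfer of the guarantee. I would therefore state these two conditions explicitly, since they are exactly what the ``privacy wall'' in Figure~\ref{fig:learning_flow} encodes.
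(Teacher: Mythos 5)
Your proof is correct and follows essentially the same route as the paper's: the paper's two-sentence argument is exactly your observation that $f$, being learned from public data only and applied row-wise, maps neighbouring private datasets to neighbouring representations, so the step-3 guarantee transfers. You merely spell out the details more formally (the neighbour-preserving map $\phi$ and the post-processing step for releasing $g \circ f$), which the paper leaves implicit.
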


\begin{proof}
As the learning of $f$ does not use private data, it does not leak any private information. Since each row of $\Zpri$ depends only on the corresponding row of $\Xpri$, $(\epsilon, \delta)$-guarantees w.r.t. $\Zpri$ translate directly to guarantees w.r.t. $\Xpri$.
\end{proof}

In the following section we give some methods that can be used to implement the DP predictor $g$ and the representation mapping $f$. In addition, we describe a procedure for tuning the hyperparameters of $f$.

\section{Methods}
\label{sec:methods}

\subsection{Differentially private prediction}

Later in Section~\ref{sec:results} we will consider prediction tasks that are either real-valued regression or binary classification tasks. Linear regression will be applied to the former and logistic regression to the latter. For now, denote by $X$ the feature matrix and by $y$ the prediction target vector (either real-valued or binary $\{-1, 1\}$)

Logistic regression can be made differentially private with objective
perturbation. The usual non-DP version of the problem can be solved by
minimising the regularised negative log-likelihood
$n^{-1} \sum_{i=1}^n \log(1 + e^{y_i w^T x_i}) + \lambda w^T w$ with
respect to the weight vector $w$, where $x_i$ and $y_i$ denote the
$i$th sample in $X$ and $y$ respectively and $\lambda$ controls the
strength of $L_2$ regularisation. In a method presented by
\citet{chaudhuri2009}, $\epsilon$-DP privacy is obtained by adding a
random bias term $b^T w / n$ (where $b$ is a random vector drawn from
a distribution with density proportional to $e^{-\epsilon ||b|| / 2}$)
to the optimisation objective. The method requires that the samples in
the input feature data are bounded into a 1-sphere.

Like DP logistic regression, also a DP linear regression algorithm can be obtained with an analogous objective perturbation method \citep{kifer2012}. However, since the underlying model belongs to the exponential family, there is also an alternative output-perturbation based $\epsilon$-DP  method that does not require iterative optimisation: Compute the sufficient statistics ($X^T X$, $X^T y$ and $y^T y$) and add noise to them via the Laplace-mechanism \citep{foulds2016}.
We use Bayesian linear regression with sufficient statistic perturbation and data clipping as described by \citet{honkela2018}.

\subsection{Representation learning}

\emph{Random projection} \citep[see e.g.][]{bingham2001} projects the $d$-dimensional data to an $r$-dimensional subspace by multiplying it with a random $d \times r$ projection matrix. This transformation has been shown to preserve approximately the distances between data points \citep{johnson1984}, which is often a desired property for dimensionality reduction methods.

\emph{Principal component analysis} (PCA) finds an orthogonal linear transformation that converts the data to coordinates that are uncorrelated and whose variance decreases from first to last coordinate. When used for dimensionality reduction, only the first $r$ coordinates are kept---these correspond to the $r$ orthogonal directions in which the variance of the original data is the highest.

\emph{Variational autoencoder} (VAE) \citep{kingma2014} learns a generative decoder model $p_\theta(x | z)$, where $z$ is a latent representation of $x$, and an encoder model $q_\xi(z | x)$ that approximates the posterior distribution $p_\theta(z | x)$. Both $p_\theta$ and $q_\xi$ are implemented as neural networks (typically MLPs) and optimised concurrently with variational inference.

We fix $z$ to be low-dimensional, in which case the learned encoder $q_\xi$ can be used for dimensionality reduction by setting $f(x) = \E_{z \sim q_\xi(\cdot | x)}[z]$. (As usual, define $q_\xi(\cdot | x)$ as a multivariate Gaussian distribution parametrised by mean $\mu_\xi(x)$ and diagonal covariance $\Sigma_\xi(x)$, in which case $f(x) = \mu_\xi(x)$.)

\subsection{Optimisation of hyperparameters}

\begin{figure}[t]
	\centerline{\includegraphics[width=\linewidth]{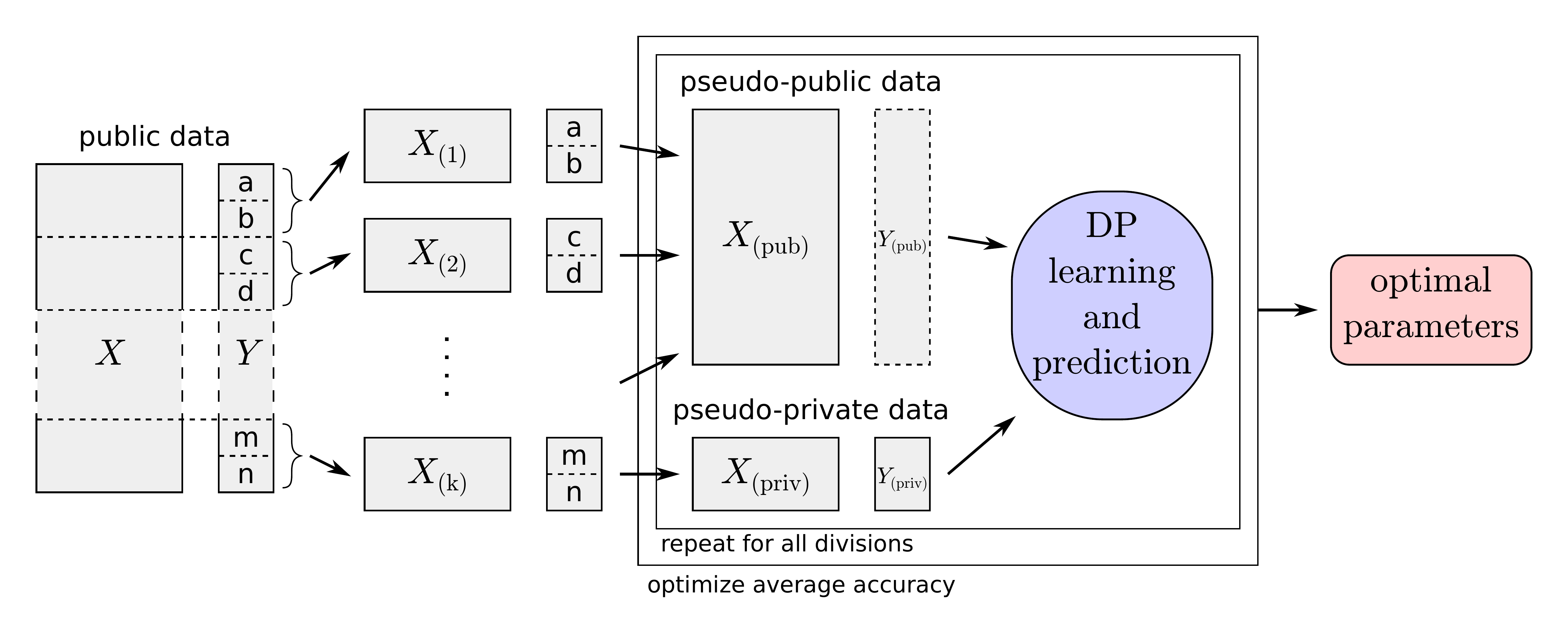}}
	\caption{The process of hyperparameter optimisation. In this example the auxiliary prediction task is assumed to be classification with multiple classes (denoted by a, b, $\ldots$, n), which are partitioned into subsets that consist of two classes each. These are then used in a cross-validation-like hyperparameter optimisation procedure.}
	\label{fig:param_opt}
\end{figure}

For selecting the dimension of the representation and any other hyperparameters of the representation-learning algorithm, we propose a combination of any parameter optimisation approach (such as Bayesian optimisation, random search or grid search) and a cross-validation-like procedure for optimising an auxiliary task of predicting $\Ypub$ from $\Xpub$.
As no private data are used, the parameter optimisation phase does not consume any of the available privacy budget.
In addition, if the auxiliary prediction task uses the same method as the main prediction task, then the hyperparameters could be optimised at the same time.

First the (public) data are divided into $k$ disjoint subsets. Instead
of using one of the subsets as ``validation'' data and the rest as
``training'' data as in cross-validation, we use one of the subsets to
simulate the private data and the rest to simulate the public
data. From now on, these are referred to as \emph{pseudo-private} and
\emph{pseudo-public} sets. The proposed framework (from
Section~\ref{sec:approach}) is then applied to these, that is, a
representation mapping $f$ is learned from the pseudo-public data, $f$
is applied to the features of pseudo-private data, a predictor $g$ is
learned for the pseudo-private target variable and its accuracy is
measured. As in $k$-fold cross-validation, this is repeated for all
$k$ possible selections of the pseudo-private subset. For measuring
the accuracy of $g$, (actual) cross-validation can be used, i.e., the
pseudo-private data can be further divided into different learning and
validation sets.

To mimic the case in which the public and private data do not have
exactly the same distribution, we also want the pseudo-public and
pseudo-private data to be sufficiently different. This guides the
optimiser towards selecting conservative hyperparameters that are more
likely to work well on a wide range of different private datasets. If
the auxiliary prediction task is classification and $\Ypub$ has
multiple classes, the subset division can be based on the classes:
Form each subset by selecting the samples from two (or more) classes.
This strategy is based on the assumption that samples belonging to
different classes have different distributions.  Otherwise, for
instance clustering (based on either $\Xpub$, $\Ypub$ or both) could
be used for finding a good subset division. An overview of the
proposed hyperparameter optimisation method is shown in
Figure~\ref{fig:param_opt}.

\section{Results}
\label{sec:results}

We conducted experiments with two prediction tasks using cancer cell
line gene expression data: cancer type classification and drug
sensitivity prediction.

\subsection{Representation learning for DP cancer type classification}

We first demonstrate the method by classifying TCGA pan-cancer samples
according to the annotated cancer type (e.g.\ lung squamous cell
carcinoma) using RNA-seq gene expression data. In
this task we use the data from The Cancer Genome Atlas (TCGA) project
\citep{tcga} as both the private and public datasets.  We use this
example because it can be performed within the large TCGA
dataset. Because most cancer type pairs are quite easy to identify, we
focus on a number of most difficult pairs.

We used preprocessed TCGA pan-cancer RNA-seq data available at
\url{https://xenabrowser.net/datapages/}.
After further preprocessing (filtering out low-expression genes, applying RLE
normalisation) the dataset contains 10534 samples, 14796 genes and 33
distinct cancer types. We pick two cancer types as private data and
the remaining cancer types form the public dataset. The main and
auxiliary prediction tasks are therefore both cancer type
classification tasks, but for distinct classes. For prediction we use
the differentially private logistic regression algorithm by
\citet{chaudhuri2009}.

\begin{table}
\centering
\caption{The list of cancer type pairs ordered in descending order by the difficulty of classification. The pairs selected to be tested are numbered.}
{\scriptsize
\begin{tabular}{ c l l }
  Case & First cancer type & Second cancer type \\
  \hline 
  1 & lung squamous cell carcinoma & head \& neck squamous cell carcinoma \\
  2 & bladder urothelial carcinoma & cervical \& endocervical cancer \\
  3 & colon adenocarcinoma & rectum adenocarcinoma \\
  4 & stomach adenocarcinoma & esophageal carcinoma \\
  5 & kidney clear cell carcinoma & kidney papillary cell carcinoma \\
  6 & glioblastoma multiforme & sarcoma \\
  - & adrenocortical cancer & uveal melanoma \\
  - & testicular germ cell tumor & uterine carcinosarcoma \\
  - & lung adenocarcinoma & pancreatic adenocarcinoma \\
  7 & ovarian serous cystadenocarcinoma & uterine corpus endometrioid carcinoma \\
  - & brain lower grade glioma & pheochromocytoma \& paraganglioma \\
  - & skin cutaneous melanoma & mesothelioma \\
  - & liver hepatocellular carcinoma & kidney chromophobe \\
  8 & breast invasive carcinoma & prostate adenocarcinoma \\
  - & acute myeloid leukemia & diffuse large B-cell lymphoma \\
  - & thyroid carcinoma & cholangiocarcinoma \\
\end{tabular}
}
\label{tbl:selected_pairs}
\end{table}

\begin{figure}[t]
\centerline{\includegraphics{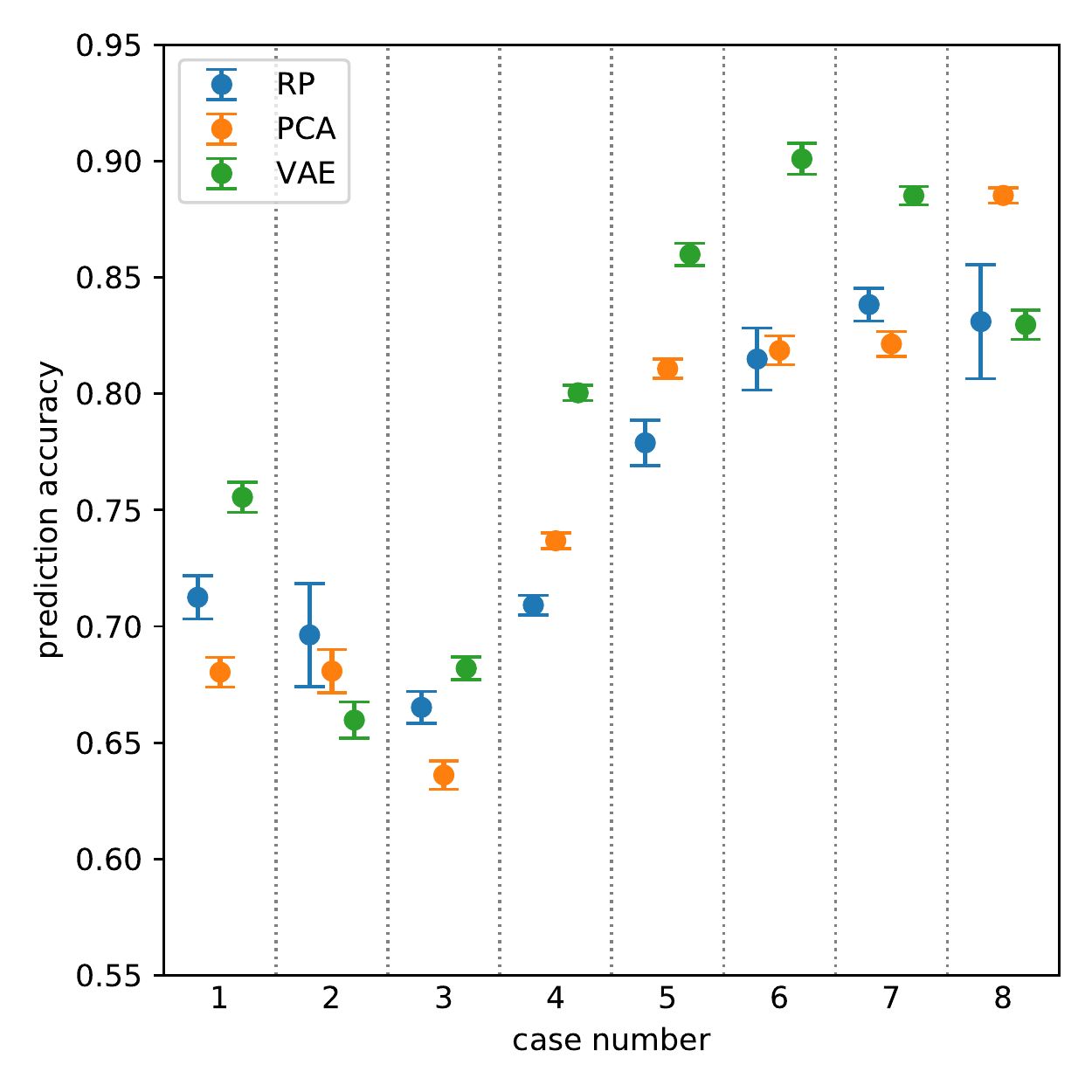}}
\caption{Logistic regression prediction accuracy (the fraction of correctly classified samples) with $\epsilon = 1.0$ in 8 cancer type classification tasks (see Table~\protect\ref{tbl:selected_pairs}). Data: TCGA. Error bars show the standard deviation of the mean accuracy over 9 independent runs of the testing phase.}
\label{fig:cancertype_acc}
\end{figure}

\begin{table}[t]
\centering
\caption{Representation dimensions (repr-dim) and other selected hyperparameters (log learning rate, the number of hidden layers, the size of hidden layers) for different cases on cancer type classification.}
\begin{tabular}{ c | c c c c c c }
  \cline{2-7}
   & \multicolumn{1}{c|}{RP} & \multicolumn{1}{c|}{PCA} & \multicolumn{4}{c|}{VAE} \\
  \cline{2-7}
  case & \multicolumn{3}{c|}{repr-dim} & \multicolumn{1}{c|}{log-lr} & \multicolumn{1}{c|}{layers} & \multicolumn{1}{c|}{layer-dim} \\
  \hline
  1 & 7 & 10 & 5 & -3.5 & 1 & 755 \\
  2 & 7 & 14 & 5 & -4.8 & 1 & 1925 \\
  3 & 8 & 10 & 5 & -5.3 & 2 & 2370 \\
  4 & 8 & 14 & 5 & -5.3 & 2 & 1270 \\
  5 & 7 & 8 & 4 & -4.3 & 1 & 260 \\
  6 & 8 & 10 & 5 & -4.5 & 1 & 330 \\
  7 & 7 & 12 & 5 & -3.7 & 2 & 1510 \\
  8 & 5 & 7 & 4 & -3.8 & 1 & 88 \\
\end{tabular}

\label{tbl:hyperparam}
\end{table}

While the split to private and public data could be done in multiple ways, the prediction task would be quite easy in many of those. Hence, we use the following procedure to produce several of these splits: (1) Consider all $\binom{33}{2}$ possible splits and run a non-DP version of the pipeline (as in Figure~\ref{fig:learning_flow}) with PCA-based reduction to eight-dimensional space. (2) Build a sequence of cancer-type pairs by picking the pair that was the hardest to predict (i.e. has lowest classification accuracy), then from the remaining cancer types again the pair that was hardest, and so on. The result is a sequence of 16 pairs ordered by the prediction difficulty (see Table~\ref{tbl:selected_pairs}). (3) Of these pairs, select the 6 hardest, as well as those 2 of the remaining pairs that had at least 200 samples in both classes.

The full testing pipeline, including the hyperparameter optimisation
phase, was then run separately for each of the 8 selected pairs as a
private dataset. In each case, the remaining 15 pairs form the
$k = 15$ subsets that were used for optimising the hyperparameters.

\paragraph{Methods}
We compare three different representation learning methods: random
projection (RP), principal component analysis (PCA) and variational
autoencoder (VAE) \citep{kingma2014}. VAE was implemented with PyTorch
\citep{paszke2017} and uses 1--3 hidden layers with ReLU activation
functions for both the encoder and the decoder. The learning phase
uses the Adam optimiser \citep{kingma2015} and is given one hour of
GPU time with early stopping. The size of the representation (for RP,
PCA and VAE) and other hyperparameters for VAE (the number of layers,
layer sizes, learning rate) are optimised with GPyOpt
\citep{gpyopt2016}. We also experimented with optimising a much
larger set of hyperparameters, 12 in total, but GPyOpt had
difficulties in obtaining similar levels of performance.

For each of the 8 test cases we ran the hyperparameter optimisation
phase once, giving it 5 days of time. Then with the best found
hyperparameters we ran the final testing 9 times with different RNG
seeds, and report the mean prediction accuracy as well as the standard
deviation of the mean. In measuring the prediction accuracy (both for
hyperparameter optimisation and for final testing) we use 10-fold
cross-validation.

\begin{figure}[t]
\centerline{\includegraphics{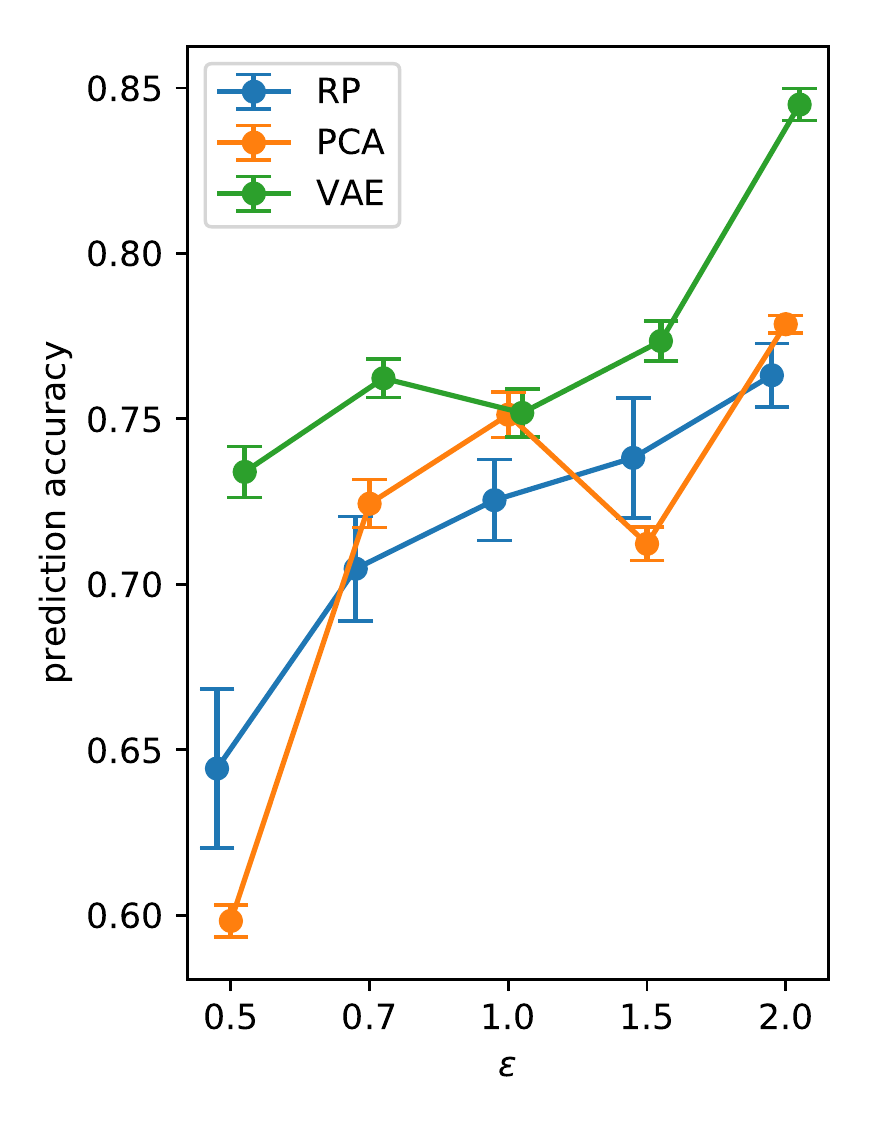}}
\caption{Logistic regression classification accuracy in cancer type classification as a function of $\epsilon$ for case 1. The error bars denote the standard error of the mean when repeating the DP learning but do not cover the uncertainty from hyperparameter selection.}
\label{fig:cancertype_acc_eps}
\end{figure}

\begin{table}[t]
\centering
\caption{Selected hyperparameters for different values of $\epsilon$ in the case 1 of cancer type classification. See Table~\ref{tbl:hyperparam} for explanation of the columns.}
\begin{tabular}{ c | c c c c c c }
  \cline{2-7}
   & \multicolumn{1}{c|}{RP} & \multicolumn{1}{c|}{PCA} & \multicolumn{4}{c|}{VAE} \\
  \cline{2-7}
  $\epsilon$ & \multicolumn{3}{c|}{repr-dim} & \multicolumn{1}{c|}{log-lr} & \multicolumn{1}{c|}{layers} & \multicolumn{1}{c|}{layer-dim} \\
  \hline
  0.5 & 5 & 5 & 5 & -4.6 & 1 & 725 \\
  0.7 & 6 & 14 & 5 & -4.6 & 1 & 880 \\
  1 & 14 & 14 & 5 & -4.1 & 1 & 395 \\
  1.5 & 9 & 9 & 5 & -4.9 & 1 & 1570 \\
  2 & 10 & 11 & 10 & -4.0 & 1 & 680 \\
\end{tabular}

\label{tbl:hyperparam_eps}
\end{table}

\paragraph{Results}

Figure~\ref{fig:cancertype_acc} shows the final prediction accuracy in
the selected 8 cases for $\epsilon = 1$. While none of the methods
fully dominates the others, VAE seems to get some edge, being clearly
the best in about half of the cases and doing decent job in the rest
of the cases too.  The selected hyperparameters are listed in
Table~\ref{tbl:hyperparam}. Interestingly, VAE seems to always end up
with lower dimensionality of the representation than the other two
methods. This could be due to the fact that VAE allows nonlinear
transformations which can help to compress the relevant information in
the data into a smaller number of dimensions. On the other hand, it is
not clear why RP also always chooses lower dimension than PCA.

The prediction accuracy as a function of $\epsilon$ in the case 1 is
shown in Figure~\ref{fig:cancertype_acc_eps} and the corresponding
hyperparameters are shown in Table~\ref{tbl:hyperparam_eps}. As
expected, larger $\epsilon$ results in better accuracy. There is some
variability compared to case 1 in Figure~\ref{fig:cancertype_acc},
which is mostly likely due to the results having been computed with
different hyperparameters. Due to the high computational cost,
variability due to hyperparameter adaptation is not included in the
error bars.

\begin{figure}[t]
\centerline{\includegraphics{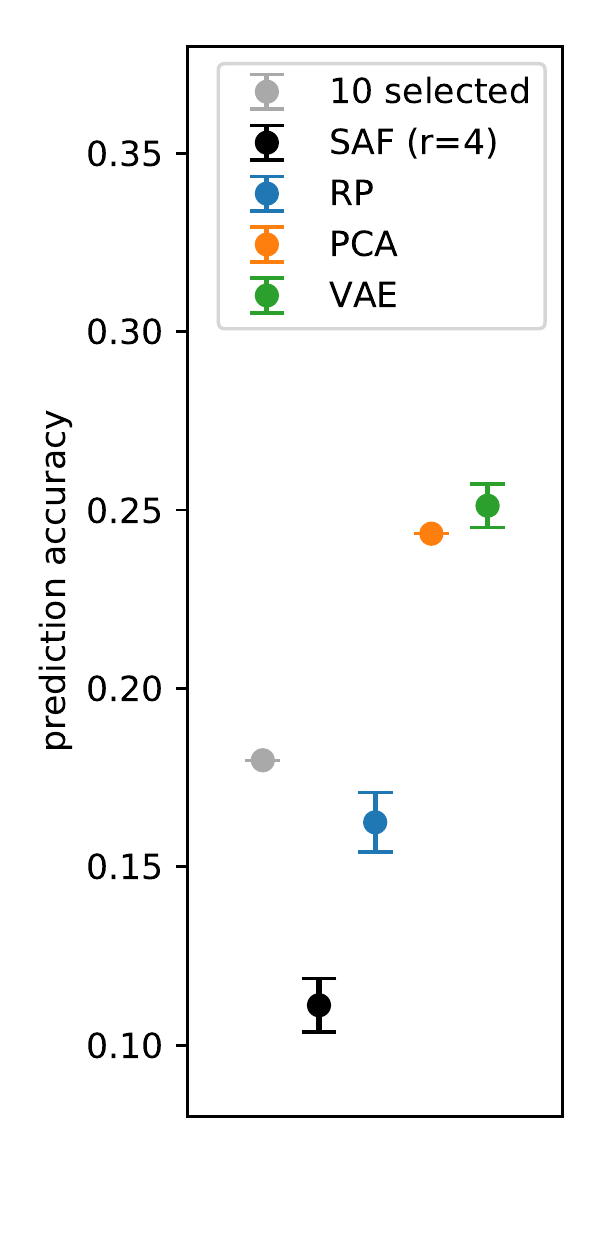}}
\caption{The accuracy of drug sensitivity prediction (Spearman's rank correlation coefficient between the measured ranking of the cell lines and the ranking predicted by the models) with differentially private linear regression ($\epsilon = 1.0$) on the GDSC data.}
\label{fig:drugsens_acc}
\end{figure}

\subsection{Representation learning for DP drug sensitivity prediction}

Our main learning task is to predict the sensitivities of cancer
cell lines to certain drugs. In this task we use data from the
Genomics of Drug Sensitivity in Cancer (GDSC) project \citep{yang2013}
as private data. After preprocessing the data contains 985 samples,
11714 genes and 265 drugs. The data are sparse in the sense that not all
drugs have been tested on all samples. For prediction we use the
differentially private Bayesian linear regression algorithm by
\citet{honkela2018}. The DP linear regression is applied for each drug
separately, using the full $\epsilon$ budget as if it was the only drug
we are interested in.  We then measure and report the average
prediction accuracy over all drugs.

As public data we use the gene expression measurements from the TCGA
data with cancer type classification as the auxiliary prediction
task. The private and public datasets are unified by removing genes
not appearing in both datasets. In addition, since the TCGA gene
expression data are RNA-seq-based while GDSC data are based on
microarrays, we apply quantile normalisation to each gene in the TCGA
data to make it match the distribution of the gene in the GDSC
data. (While this operation theoretically breaks the privacy
guarantees, in practice we can avoid the issue by assuming that the
expression distributions obtained with the microarray technology are
public knowledge.)

\paragraph{Methods}
In addition to RP, PCA and VAE, we also compare to DP feature
selection by Sample and Aggregate framework (SAF) as presented by
\citet{kifer2012}, as well as to using a set of 10 preselected genes
that were used by \citet{honkela2018} in the same prediction task. In
the case of SAF half of the privacy budget is reserved for feature
selection.

RP, PCA and VAE learning was performed in a similar manner as in the
cancer type classification task. For selecting the size of the
representation of SAF, we simply ran it with all possible sizes and
select the best result (which is obviously unfair for the other
methods and would yield a weaker privacy guarantee).

\paragraph{Results}

Figure~\ref{fig:drugsens_acc} shows the average prediction
performance, measured by Spearman's rank correlation. Here PCA and VAE
are the best by some margin, both improving significantly over the
previous state-of-the-art with preselected genes.  On the other hand,
SAF is clearly the worst as the DP feature selection is essentially
random due to small privacy budget, and since it leaves only half of
the privacy budget for the main prediction task.

\section{Discussion}

Our results clearly demonstrate that representation learning with
public data can significantly improve the accuracy of differentially
private learning, compared to using a set of preselected dimensions or
doing differentially private feature selection. Whether it is
beneficial to use more advanced representation learning methods such
as variational autoencoders instead of simple methods such as PCA or
random projections, depends on the task. On some tasks that certainly
seems to be the case.

In our current approach, the representation is learned in an
unsupervised manner and the auxiliary supervised task is only used for
hyperparameter selection. A natural question, that we leave for further
work, is whether representation learning would also benefit from
having an integral auxiliary prediction task that would be learned
concurrently with the representation. The optimisation
target would in that case be a combination of unsupervised
reconstruction error and supervised prediction error. This approach
would require an auxiliary target variable, as is the case in this
work with hyperparameter optimisation.

In general, we believe DP learning can be important in opening genomic
and other biomedical datasets to broader use. This can significantly
advance open science and open data, and lead to more accurate models
for precision medicine. So far, the accuracy of DP learning in most
practical applications is not comparable to realistic non-private
alternatives. Our present work makes an important contribution toward
making DP learning practical.

In the present work the representation learning was not performed
under DP. This is a clear limitation if the other data set also needs
privacy protection. This can in theory be addressed easily, by simply
training the representation model under DP, but this will likely have
an impact on the accuracy of the final model. Ultimately we believe
that a clever combination of private and non-private data such as in
our paper can lead to the best results.

\section*{Acknowledgements}

We thank the reviewers of an earlier workshop version of this article for helpful comments.\vspace*{-12pt}

\section*{Funding}

This work has been supported by the Academy of Finland
[Finnish Center for Artificial Intelligence FCAI and
grants 292334, 294238, 303815, 303816, 313124].

\bibliographystyle{natbib}

\bibliography{paper.bib}

\end{document}